\documentclass[oneside,english]{amsart}
\usepackage[T1]{fontenc}
\usepackage[latin9]{inputenc}
\usepackage[active]{srcltx}
\usepackage{amsthm}
\usepackage{graphicx}
\usepackage{esint}

\makeatletter
\numberwithin{equation}{section}
\numberwithin{figure}{section}
  \theoremstyle{remark}
  \newtheorem*{rem*}{\protect\remarkname}
  \theoremstyle{definition}
  \newtheorem*{problem*}{\protect\problemname}
\theoremstyle{plain}
\newtheorem{thm}{\protect\theoremname}
  \theoremstyle{plain}
  \newtheorem{prop}[thm]{\protect\propositionname}
  \theoremstyle{definition}
  \newtheorem{defn}[thm]{\protect\definitionname}

\pdfoutput=1
\usepackage{babel}

\makeatother

\usepackage{babel}
  \providecommand{\definitionname}{Definition}
  \providecommand{\problemname}{Problem}
  \providecommand{\propositionname}{Proposition}
  \providecommand{\remarkname}{Remark}
\providecommand{\theoremname}{Theorem}

\begin{document}

\title{Effective diffusion on riemannian fiber bundles}

\author{Carlos Valero Valdes\\
Departamento de Matematicas\\
Universidad de Guanajuato\\
 Guanajuato, Gto, Mexico}

\date{26 January 2015}

\thanks{Partially supported by CONACyT grant 135106. }
\begin{abstract}
The purpose of this paper is to provide equations to model the evolution
of effective diffusion over a Riemannian fiber bundle (under the hypothesis
of infinite diffusion rate along compact fibers). These equations
are obtained by projecting the diffusion equation onto the base manifold
of the fiber bundle. The projection (or dimensional reduction) is
achieved by integrating the diffusion equation along the fibers of
the bundle. This work generalizes an put into a general framework
previous work on effective diffusion over channels and the interfaces
between curved surfaces.
\end{abstract}

\maketitle
\global\long\def\CC{\mathbb{C}}

\global\long\def\RR{\mathbb{R}}

\global\long\def\map{\rightarrow}

\global\long\def\mult{\mathcal{M}}

\global\long\def\EE{\mathcal{E}}

\global\long\def\OO{\mathcal{O}}

\global\long\def\FH{\mathcal{F}}

\global\long\def\CH{\mathcal{H}}

\global\long\def\VV{\mathcal{V}}

\global\long\def\PP{\mathcal{P}}

\global\long\def\CS{\mathcal{C}}

\global\long\def\tangent{T}

\global\long\def\cotangent{\tangent^{*}}

\global\long\def\conormal{\mathcal{C}}

\global\long\def\SS{\mathcal{S}}

\global\long\def\KK{\mathcal{K}}

\global\long\def\NN{\mathcal{N}}

\global\long\def\sym#1{\hbox{S}^{2}#1}

\global\long\def\symz#1{\hbox{S}_{0}^{2}#1}

\global\long\def\proj#1{\hbox{P}#1}

\global\long\def\SO{\hbox{SO}}

\global\long\def\GL{\hbox{GL}}

\global\long\def\U{\hbox{U}}

\global\long\def\tr{\hbox{tr}}

\global\long\def\ZZ{\mathbb{Z}}

\global\long\def\der#1#2{\frac{\partial#1}{\partial#2}}

\global\long\def\dder#1#2{\frac{\partial^{2}#1}{\partial#2^{2}}}

\global\long\def\covder{\hbox{D}}

\global\long\def\diff{d}

\global\long\def\dero#1{\frac{\partial}{\partial#1}}

\global\long\def\ind{\hbox{\, ind}}

\global\long\def\deg{\hbox{deg}}

\global\long\def\smb{S}

\global\long\def\DD{\mathcal{D}}

\global\long\def\QQ{\mathcal{Q}}

\global\long\def\RRR{\mathcal{R}}

\global\long\def\dd#1#2{\frac{d^{2}#1}{d#2^{2}}}

\global\long\def\d#1#2{\frac{d#1}{d#2}}

\global\long\def\and{\hbox{\,\, and\,\,\,}}

\global\long\def\where{\hbox{\,\, where\,\,\,}}

\global\long\def\KK{\mathcal{K}}

\global\long\def\II{\mathcal{I}}

\global\long\def\JJ{\mathcal{J}}

\global\long\def\Re{\hbox{Re}}

\global\long\def\Im{\hbox{Im}}

\global\long\def\nn{\bm{n}}

\global\long\def\div{\hbox{div}}

\global\long\def\lie{\mathcal{L}}

\global\long\def\TT{\mathcal{T}}

\global\long\def\HH{\mathcal{H}}

\section{Introduction}

Understanding spatially constrained diffusion is of fundamental importance
in various sciences, such as biology, chemistry and nano-technology.
However, solving the diffusion equation in arbitrarily constrained
geometries is a very difficult task. One way to tackle it consists
in reducing the degrees of freedom of the problem by considering only
the main direction(s) of transport.
\begin{figure}
\includegraphics[scale=0.45]{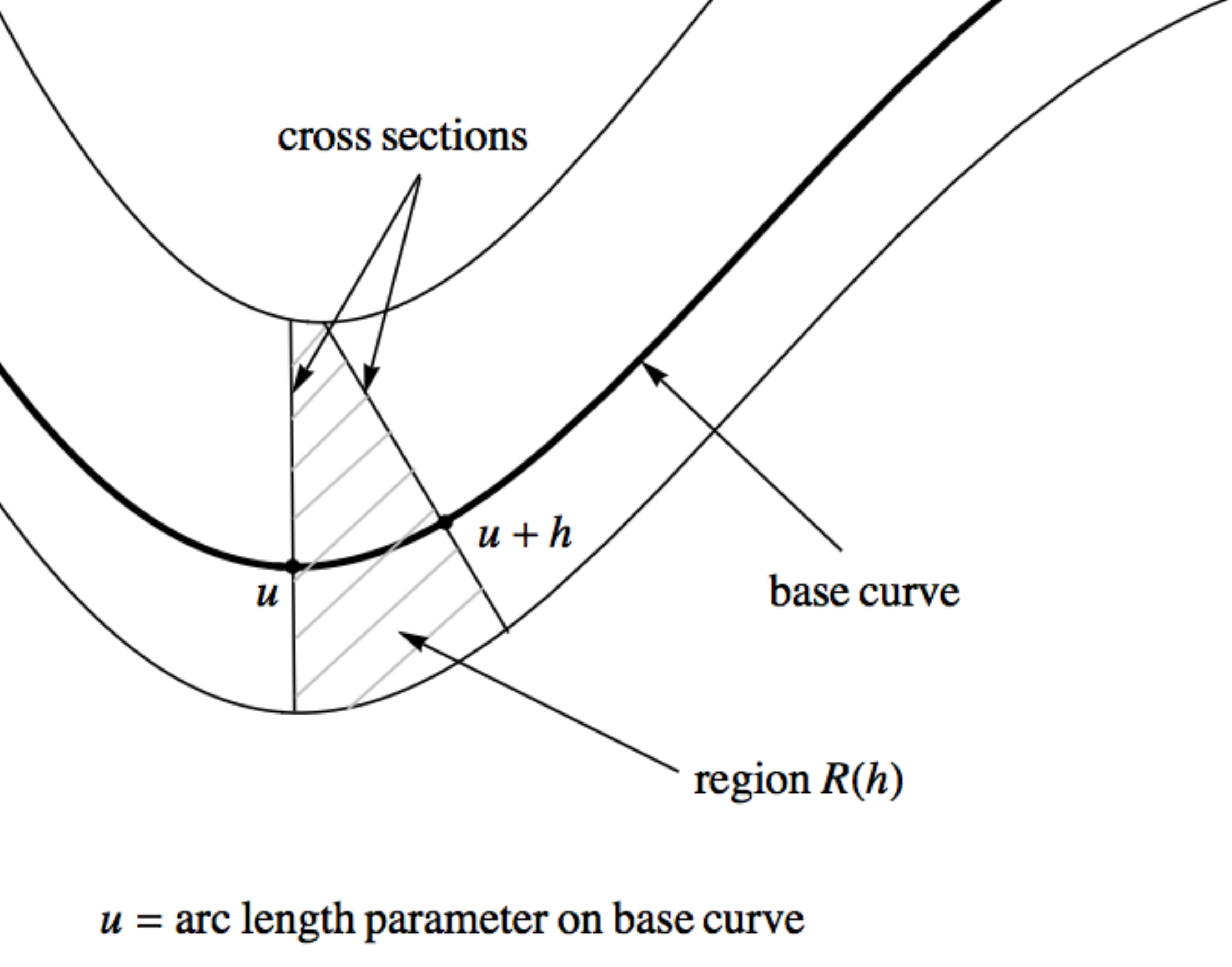}\protect\caption{\label{fig:Rh}Region between to transversal cross section of a channel.}

\end{figure}
For example, the study of diffusion on thin channels can be carried
out by reducing several spatial degrees of freedom to a single one
by means of a projection method. More concretely, consider a process
in a channel modeled by a density function $P$ that obeys the diffusion
equation
\[
\der Pt(x,t)=D_{0}\Delta P(x,t),
\]
subject to the restriction that there is no density flow along the
channel's wall(s). We can construct an \emph{effective density function}
by letting 
\begin{equation}
\rho(u,t)=\lim_{h\mapsto0}\frac{\hbox{Total\, concentration\, of\,\,}P\hbox{\, in\,\,}R(h)}{h},\label{eq:EffectiveConcentrationClassic}
\end{equation}
where the region $R(h)$ is the section of the channel within two
transversal cross sections that are an arc distance of $h$ apart
over a base curve that ``follows'' the channel's geometry (see Figure
\ref{fig:Rh}), and the variable $u$ is the arc-length parameter
on this curve. It turns out that this effective density function $\rho$
obeys in an approximate manner an equation of the form (known as a
generalized Fick-Jacobs equation)

\begin{equation}
\der{\rho}t(u,t)=\dero u\left(\sigma(u)\DD(u)\dero u\left(\frac{\rho(u,t)}{\sigma(u)}\right)\right),\label{eq:GeneralizedFJClassic}
\end{equation}
where $\sigma$ is given by
\begin{equation}
\sigma(u)=\lim_{h\mapsto0}\frac{\hbox{Area}(R(h))}{h}.\label{eq:SigmaClassic}
\end{equation}
The function $\DD$ is known as the \emph{effective diffusion coefficient}
and it encapsulates the effect of the channel's geometry on the diffusion
process along the base curve. Much work has been done (see \cite{fj:antipov,fj:aproximations,fj:bradley,fj:di-projection-diffusion,fj:entropybarrierzwanzig,fj:kp-diffusion-projection,fj:kp-Extendex-fj-variational,fj:kp-fick-jacob-correction,fj:ogawa,fj:projdiff})
to find explicit formulas for $\DD$ in terms of geometrical quantities
associated to the channel, so that the Fick-Jacobs equation models
the evolution of $\rho$ as closely as possible . We can distinguish
two cases.
\begin{enumerate}
\item \emph{Infinite transverse diffusion rate. }In this case it is assumed
that the density function $P$ stabilizes instantly in the transversal
directions of the channel. In mathematical terms this means that $P$
is constant along these transversal directions. This assumption results
in an effective diffusion coefficient that depends on the curvature
function of the base curve and $0$-th order geometrical quantities
of the cross sections, such as width or area.
\item \emph{Finite transverse diffusion rate. }In this case the finite time
of transversal stabilization of $P$ is taken into account. This is
characterized mathematically in that the resulting formulas for $\DD$
involve the curvature function of the base curve, and tangential and
curvature information of the channel's wall(s).
\end{enumerate}
The selection of the base curve is very important in the dimension-reduction
technique described  above. This is demonstrated by the fact (see
\cite{fj:projdiff,fj:ogawa}) that if for channels of constant width
the base curve is chosen properly, then the formulas for $\DD$ coincide
for the finite and infinite transversal diffusion rate cases. 

Motivated by the above discussion, we know describe the main purpose
of the paper. We develop a very general theory, in the infinite transversal
diffusion rate case, for projecting the diffusion equation in a space
of dimension $n$ to a base space of dimension $m$ with $m<n$. We
do this in the context of the theory of \emph{fibre bundles}. Such
objects have a total space $E$, a base space $M$ and a projection
map $\pi:E\rightarrow M$. In the case discussed above, the total
space $E$ is the channel, $M$ is the base curve, and $\pi$ sends
points on the transversal cross sections (fibers) to their base point
in the base curve. The process of passing from the density $P$ to
the reduced density $\rho$ is a particular case of a very well known
construction: that of integrating a differentiable form along the
fibers of the bundle. Using these tools we are able to give global
and coordinate-free proofs of all our results. In this general setting,
the effective diffusion $\DD$ becomes an endomorphism of the tangent
bundle of $M$, i.e for every $x$ in $M$ we have that $\DD(x)$
is a linear map in the tangent space $\tangent_{x}M$ of $M$. We
will compute the effective diffusion using local frames instead of
local coordinates, which results in a simpler and more geometric way
of doing calculations. The paper is organized as follows
\begin{itemize}
\item In section \ref{sec:EffectiveDiffusionEquation}, we show how to reduce
the continuity equation in the fiber bundle $E$ to a reduced continuity
equation in its base space $M$. We prove that if Fick's law holds
on $E$ (for a constant diffusion coefficient $D_{0}$) and if we
have infinite diffusion rate in fiber direction, then the reduced
continuity equation becomes a diffusion equation in $M$ (see Proposition
\ref{prop:EffectiveDiffusionEquation}). This last equation involves
an effective diffusion coefficient $\DD$, which is a bundle endomorphism
of the tangent space of $M$.
\item In section \ref{sec:CurvesOnThePlane} we compute the effective diffusion
for channels of constant width over arbitrary curves on the plane.
We have obtained this result previously in \cite{fj:projdiff}, and
our re-derivation of the formula serves as a test case of our general
theory.
\item In section \ref{sec:SurfacesIn3dSpace} we compute the effective diffusion
endomorphism $\DD$ corresponding to the interface of two equidistant
surfaces in 3-dimensional space. We do this by showing that the principal
directions of the base surface are eigenvectors of $\DD$, and then
computing the eigenvalues of $\DD$ in these directions. Using this
result, we show that the calculation of $\DD$ made by Ogawa in \cite{fj:ogawa-surfaces}
for an elliptical cylinder, is just an approximation to ours obtained
by only considering the two first terms in a series expansion of the
function $\hbox{arctanh}$ (the inverse of the hyperbolic tangent).
\item In section \ref{sec:EffectiveDiffSurfaceTube} we compute the effective
diffusion on the surface (not the interior) of a circular channel
over an arbitrary curve in $\RR^{3}$. The main point of the calculation
is to illustrate how our techniques still apply to fibre bundles whose
fibers are manifolds without boundary (in this case, circles).
\item In the Appendix we make a brief review of the geometrical concepts
needed for the construction of the effective diffusion endomorphism.
\end{itemize}

\section{The effective diffusion equation}

\label{sec:EffectiveDiffusionEquation}

Let $E$ be fiber bundle over an $m$-dimensional manifold $M$, having
compact fibers (with or without boundary) of dimension $k$, and projection
map $\pi:E\rightarrow M$. We will assume that $M$ and $E$ are orientable,
and with Riemannian metrics $<,>_{E}$ and $<,>_{M}.$

\subsection*{The continuity equation}

The \emph{continuity equation }on the fibre bundle $E$ is given by\emph{
}
\begin{equation}
\der Pt+\div(J)=0,\label{eq:TransportEquation}
\end{equation}
where the \emph{density function} $P$ is a time dependent function
on $E$, and the \emph{density flow} $J$ is a time dependent vector
field in $E$. The divergence of $J$ is given by
\begin{equation}
\div(J)=(*(d(*J^{\flat})))^{\sharp},\label{eq:divergence}
\end{equation}
where $d$ is the differential operator acting on differentiable forms
in $E$, $*$ is the Hodge star operator, and the $\flat$-operator
converts vector fields to 1-forms (see the Appendix).

\subsection*{Integrating the continuity equation along the fibers}

Since we are assuming that the fibers of $E$ are compact, by integrating
along them we can define the operator $\pi_{*}$ sending $l$-forms
in $E$ to $(l-k$)-forms in $M$. For any $l-$form $\omega$ on
$E$ we define
\[
\pi_{*}(\omega)_{x}(X_{1},\ldots,X_{l-k})=\int_{\pi^{-1}(x)}\beta_{\omega}
\]
where 
\[
\beta_{\omega}(Y_{1},\ldots,Y_{k})=\omega(\tilde{X}_{1},\ldots,\tilde{X}_{l-k},Y_{1},\ldots,Y_{k})
\]
and $\tilde{X}_{i}$ is a lift of $X_{i}$, i.e $D\pi(\tilde{X}_{i})=X_{i}$.
Observe that $\tilde{X}_{i}$ is defined on the whole fiber $\pi^{-1}(x).$

By applying $*$ to \ref{eq:TransportEquation} and using formula
\ref{eq:divergence} we get
\begin{equation}
\dero t(*P)+d(*J^{\flat})=0.\label{eq:TransportEqDiffForm}
\end{equation}
If we apply $\pi_{*}$ and then $*$ to the above equation we obtain
\begin{equation}
\der{\rho}t+*(\pi_{*}(d(*J^{\flat})))=0,\label{eq:EffectiveTransportEq0}
\end{equation}
where
\begin{equation}
\rho=*(\pi_{*}(*P))=\frac{\pi_{*}(P\mu_{E})}{\mu_{M}},\label{eq:effectiveConcentration}
\end{equation}
and $\mu_{E}$ and $\mu_{M}$ be the metric volume forms in $E$ and
$M$. 
\begin{rem*}
Formula \ref{eq:effectiveConcentration} is a generalization of formula
\ref{eq:EffectiveConcentrationClassic}.
\end{rem*}

\subsection*{Neumann Boundary conditions}

We will need adequate boundary condition on $J$ in order to write
equation \ref{eq:EffectiveTransportEq0} as a continuity equation
in $M$. If the fibers of $E$ are \emph{manifolds without boundary}
then $\pi_{*}$ commutes with $d$ (see \cite[pg. 62]{kn:bott}),
i.e for any differential form $\omega$ we have that 
\begin{equation}
d(\pi_{*}\omega)=\pi_{*}(d\omega)\label{eq:pistardcommute}
\end{equation}
When the fibers of $E$ are \emph{manifolds with boundary} we will
assume that $\omega$ vanishes on all the vectors perpendicular to
the boundary $\partial E$ of $E$. This last condition is equivalent
to the assuming that there is no density flow across $\partial E$,
i.e $J$ is parallel to the boundary of $E$. This last condition
ensures that formula \ref{eq:pistardcommute} still holds.

\subsection*{Integrating the density flow along the fibers }

We will now use formula \ref{eq:pistardcommute} to show that we can
write \ref{eq:EffectiveTransportEq0} as a continuity equation in
$M$. We need to find a time dependent vector field $j$ in $M$ such
that
\[
\div(j)=*(\pi_{*}(d(*J^{\flat}))).
\]
Using formula \ref{eq:pistardcommute}, and the definition of the
divergence of $j$, we can write the above formula as
\[
*(d(*j^{\flat}))=*(d(\pi_{*}(*J^{\flat}))).
\]
This last equation is satisfied if we let
\begin{equation}
j=(-1)^{m-1}(*(\pi_{*}(*J^{\flat})))^{\sharp},\label{eq:effectiveFlow}
\end{equation}
where the $\sharp$-operator converts 1-forms to vector fields (see
Appendix).

\subsection*{The effective continuity equation}

We will refer to the time dependent function $\rho$ given by formula
\ref{eq:effectiveConcentration} as the \emph{effective density} function,
and to the time dependent vector field $j$ given by \ref{eq:effectiveFlow}
as the \emph{effective density flow}. We proved above that these objects
satisfy the equation
\begin{equation}
\der{\rho}t+\div(j)=0,\label{eq:EffectiveContinuity}
\end{equation}
which we will refer to as the \emph{effective continuity equation}.

\subsection*{The diffusion equation}

Fick's law establishes that
\[
J(x,t)=-D(x)\nabla P(x,t),
\]
where for $x$ in $E$ we have that $D(x)$ is a linear operator from
$\tangent_{x}E$ to $\tangent_{x}E$, i.e an endomorphism of $\tangent E_{x}$.
The simplest choice of $D$ is to let it be scalar multiplication
by a constant $D_{0}$. By using more general $D$'s we can model
the inhomogeneity and anisotropy of $E$. Assuming Fick's law, the
continuity equation in $E$ becomes the \emph{diffusion equation}
\begin{equation}
\der Pt(x,t)=\div(D(x)\nabla P(x,t)),\label{eq:diffEq}
\end{equation}
In our work we will always assume that the endomorphism $D$ on $\tangent E$
is multiplication by a positive scalar $D_{0}$, in which case the
diffusion equation becomes 

\[
\der Pt(x,t)=D_{0}\Delta P(x,t),
\]
where the \emph{laplacian operator} $\Delta$ applied to $P$ is
\[
\Delta P=\div(\nabla P).
\]

\begin{problem*}
If Fick's law holds in $E$ for $D=D_{0}$, does the effective density
$\rho$ obeys a diffusion equation in $M$?
\end{problem*}
There is an important case when we can answer the above question positively
(see Proposition \ref{prop:EffectiveDiffusionEquation} in the next
paragraph).

\subsection*{An effective diffusion equation for infinite fiber diffusion rate}

If the fibers of $E$ are ``small enough'' compared to the ``size''
of $M$, then it is to be expected that the density $P$ will stabilize
faster along the fibers than along $M$. If we assume that this stabilization
occurs infinitely fast, then $P$ must be constant along the fibers
of $E$. We borrow the nomenclature from the physics literature, and
refer to this situation by saying that there is an \emph{infinite
fiber diffusion rate}. Under this assumption, can write
\[
P=\pi^{*}Q=Q\circ\pi
\]
for a time dependent function $Q$ in $M$. Using \ref{eq:effectiveConcentration}
and the projection formula (see \cite[pg. 63]{kn:bott}) , we obtain
\[
\rho=\frac{\pi_{*}(\pi^{*}(Q)\mu_{E})}{\mu_{M}}=\frac{Q\pi_{*}(\mu_{E})}{\mu_{M}},
\]
which allows us to obtain $Q$ in terms of $\rho$ as 
\begin{equation}
Q=\frac{\rho}{\sigma},\label{eq:Pfromp}
\end{equation}
where
\begin{equation}
\sigma=\frac{\pi_{*}(\mu_{E})}{\mu_{M}}.\label{eq:sigma}
\end{equation}
Observe that if $R$ is a region in $M$ then 
\begin{equation}
\int_{R}\sigma\mu_{M}=\int_{\pi^{-1}(R)}\mu_{E}=\hbox{vol}_{E}(\pi^{-1}(R)).\label{eq:sigmaAsVolumes}
\end{equation}

\begin{rem*}
Formula \ref{eq:sigma} is a generalization of that given by \ref{eq:SigmaClassic}.
\end{rem*}
If Fick's law holds in $E$ for $D=D_{0}$, then we have
\[
J^{\flat}=-D_{0}d(\pi^{*}Q)=-D_{0}\pi^{*}(dQ),
\]
where $\pi^{*}$ is the pull-back of forms under $\pi$. Hence, we
can write the effective flow \ref{eq:effectiveFlow} as
\begin{equation}
j=-(-1)^{m-1}D_{0}(*(\pi_{*}(*(\pi^{*}(dQ)))))^{\sharp}.\label{eq:EffectiveFlowGradientCase}
\end{equation}
If we define $\DD$ by 
\begin{equation}
\DD=(-1)^{m-1}\left(\frac{D_{0}}{\sigma}\right)(\sharp\circ*\circ\pi_{*}\circ*\circ\pi^{*}\circ\flat),\label{eq:AbstractDOperator}
\end{equation}
where $\circ$ stands for composition of operators, then by combining
formulas \ref{eq:Pfromp} and \ref{eq:EffectiveFlowGradientCase}
we obtain
\begin{equation}
j=-\sigma\DD(\nabla Q)=-\sigma\DD\left(\nabla\left(\frac{\rho}{\sigma}\right)\right).\label{eq:FluxInTermsOfD}
\end{equation}
The above formulas now lead to the following result.
\begin{prop}
\label{prop:FickJacobs}For the case of infinite fiber diffusion rate,
the effective density function $\rho$ satisfies the equation 
\begin{equation}
\der{\rho}t(x,t)=\div\left(\sigma(x)\DD(x)\left(\nabla\left(\frac{\rho(x,t)}{\sigma(x)}\right)\right)\right),\label{eq:SuperGeneralizedFJ}
\end{equation}
where $\DD:\tangent X\rightarrow\tangent X$ is the vector bundle
morphism defined in formula \ref{eq:AbstractDOperator}.\end{prop}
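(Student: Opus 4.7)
The plan is straightforward: essentially every ingredient for this proposition has already been assembled in the preceding subsections, so the proof amounts to substituting the formula for the effective density flow $j$ in the infinite fiber diffusion rate case (equation \ref{eq:FluxInTermsOfD}) into the effective continuity equation (equation \ref{eq:EffectiveContinuity}).

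More concretely, first I would invoke the infinite fiber diffusion rate hypothesis to write $P=\pi^{*}Q$ for a time-dependent $Q$ on $M$, and use the projection formula together with formula \ref{eq:sigma} to recover $Q=\rho/\sigma$ as in equation \ref{eq:Pfromp}. Second, I would apply Fick's law with $D=D_{0}$ and use naturality of the exterior derivative under pull-back to write $J^{\flat}=-D_{0}d(\pi^{*}Q)=-D_{0}\pi^{*}(dQ)$. Third, I would substitute this expression into the formula \ref{eq:effectiveFlow} for the effective flow; recognizing the composition $\sharp\circ*\circ\pi_{*}\circ*\circ\pi^{*}\circ\flat$ as the operator defining $\DD$ (up to the scalar factor in \ref{eq:AbstractDOperator}), and tracking the $(-1)^{m-1}$ signs, yields exactly $j=-\sigma\DD(\nabla(\rho/\sigma))$. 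Finally, substituting this $j$ into $\der{\rho}t+\div(j)=0$ gives the claimed equation \ref{eq:SuperGeneralizedFJ}.

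The only step requiring any real work is one that has already been done upstream, namely the commutation identity $d\circ\pi_{*}=\pi_{*}\circ d$. This is the key fact that allows $*(\pi_{*}(d(*J^{\flat})))$ to be rewritten as the divergence of a genuine vector field on $M$, and thus makes the whole projection scheme coherent; it requires either that the fibers be without boundary or that $J$ be tangent to $\partial E$ (the Neumann condition). Once this identification is in hand, the rest of the argument is a purely algebraic manipulation of the Hodge-star/pull-back/push-forward diagram, with no analytic input. Hence I expect no obstacle beyond bookkeeping of signs and operators, and the proposition really functions as a summary statement of the construction rather than a theorem with independent content.
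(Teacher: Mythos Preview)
Your derivation of equation \ref{eq:SuperGeneralizedFJ} matches the paper's: substitute the expression $j=-\sigma\DD(\nabla(\rho/\sigma))$ from \ref{eq:FluxInTermsOfD} into the effective continuity equation \ref{eq:EffectiveContinuity}, and the result follows immediately. The upstream ingredients you list (the projection formula giving $Q=\rho/\sigma$, Fick's law, the commutation $d\circ\pi_{*}=\pi_{*}\circ d$) are exactly the ones already established before the proposition, so for that part there is nothing to add.

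However, you have overlooked one genuine claim embedded in the statement: the proposition asserts that $\DD$ is a \emph{vector bundle morphism} of $\tangent M$, i.e.\ that for each $x$ the map $\DD(x):\tangent_{x}M\rightarrow\tangent_{x}M$ is well-defined and linear. A priori, formula \ref{eq:AbstractDOperator} only defines an operator on global vector fields, and the composite $\sharp\circ*\circ\pi_{*}\circ*\circ\pi^{*}\circ\flat$ involves integration over fibers, so tensoriality is not automatic. The paper devotes most of its proof to this point, checking $C^{\infty}(M)$-linearity: one computes $(*\circ\pi^{*}\circ\flat)(fX)=(\pi^{*}f)\,*(\pi^{*}(X^{\flat}))$ and then applies the projection formula to pull $f$ through $\pi_{*}$, obtaining $\DD(fX)=f\DD(X)$. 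Your proposal dismisses the proposition as ``a summary statement \ldots\ rather than a theorem with independent content,'' which misses this verification; without it, writing $\DD(x)$ in the displayed equation is not justified.
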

\begin{proof}
This follows from the effective continuity equation \ref{eq:EffectiveContinuity}
and formula \ref{eq:FluxInTermsOfD}. As defined by formula \ref{eq:AbstractDOperator},
$\DD$ sends vector fields in $M$ to vector fields in $M$. To prove
that for any $x$ in $M$ it is in fact linear map $\DD(x):\tangent_{x}M\rightarrow\tangent_{x}M$,
we show that $\DD(fX)=f\DD(X)$ for any smooth function $f:M\rightarrow\RR$
and vector field $X$ in $M$. We have that
\[
(*\circ\pi^{*}\circ\flat)(fX)=(\pi^{*}f)(\pi*(X^{\flat})).
\]
Using this and the projection formula (see \cite[pg. 63]{kn:bott}),
we obtain that
\[
\pi_{*}(*\circ\pi^{*}\circ\flat)(fX)=f\pi_{*}(\pi*(X^{\flat})),
\]
and hence
\[
\DD(fX)=f\DD(X).
\]
\end{proof}
\begin{defn}
We will refer to the endomorphism $\DD$ of $\tangent M$, given by
formula \ref{eq:AbstractDOperator}, as the\emph{ effective diffusion
endomorphism.}\end{defn}
\begin{rem*}
Equation \ref{eq:SuperGeneralizedFJ} is a generalization of \ref{eq:GeneralizedFJClassic}.\end{rem*}
\begin{prop}
\label{prop:EffectiveDiffusionEquation}For the case of infinite fiber
diffusion rate, there exists a metric in $M$ such that the effective
density function $\rho$ satisfies the diffusion equation
\begin{equation}
\der{\rho}t(x,t)=\div\left(\DD(x)\nabla\rho(x,t)\right),\label{eq:effectiveDiffEquation}
\end{equation}
where $\DD$ is the effective diffusion endomorphism. Furthermore,
this choice of metric is such that for any region $R$ in $M$ we
have that
\[
\hbox{vol}_{M}(R)=\hbox{vol}_{E}(\pi^{-1}(R)).
\]
\end{prop}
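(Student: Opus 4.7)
The plan is to exhibit the correct conformal rescaling of the metric on $M$ and then invoke Proposition \ref{prop:FickJacobs} directly. The guiding observation is equation \ref{eq:sigmaAsVolumes}, which says that the measure $\sigma\mu_{M}$ already assigns to each region $R\subset M$ the value $\hbox{vol}_{E}(\pi^{-1}(R))$. So the metric I want on $M$ is any Riemannian metric whose volume form equals $\sigma\mu_{M}$.

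First I would construct such a metric explicitly. Under a conformal rescaling $\tilde g=\lambda^{2}g$ of the original metric $g$ on $M$, the volume form rescales by $\lambda^{m}$, so the choice $\lambda=\sigma^{1/m}$ gives $\mu_{\tilde g}=\sigma\mu_{M}$, and the ``furthermore'' clause of the statement then follows at once from \ref{eq:sigmaAsVolumes}. This prescription is well-defined globally because $\sigma>0$ everywhere, since the fibers are compact and nonempty.

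Next I would reread every object appearing in Proposition \ref{prop:FickJacobs} relative to $\tilde g$. Writing tildes for the quantities that change (the $M$-Hodge star, sharp, flat, gradient and divergence, and the resulting $\tilde\rho$, $\tilde\sigma$, $\tilde\DD$ obtained from \ref{eq:effectiveConcentration}, \ref{eq:sigma} and \ref{eq:AbstractDOperator}), the defining formula for $\tilde\sigma$ and $\tilde\rho$ gives, by construction,
$$\tilde\sigma \;=\; \frac{\pi_{*}(\mu_{E})}{\mu_{\tilde g}} \;=\; 1, \qquad \tilde\rho \;=\; \frac{\pi_{*}(P\mu_{E})}{\mu_{\tilde g}}.$$
Applying Proposition \ref{prop:FickJacobs} with $\tilde g$ in place of the metric on $M$ then yields
$$\der{\tilde\rho}{t} \;=\; \widetilde{\div}\bigl(\tilde\sigma\,\tilde\DD\,\widetilde{\nabla}(\tilde\rho/\tilde\sigma)\bigr) \;=\; \widetilde{\div}\bigl(\tilde\DD\,\widetilde{\nabla}\tilde\rho\bigr),$$
which, after dropping the tildes from $\tilde\rho$ and $\tilde\DD$, is exactly equation \ref{eq:effectiveDiffEquation}.

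The whole argument is really bookkeeping once one has spotted the right conformal factor, so there is no serious obstacle; the structural content sits entirely in Proposition \ref{prop:FickJacobs}. The one point worth emphasising is that the effective diffusion endomorphism appearing in the conclusion is \emph{not} the same operator as the original $\DD$: the $M$-Hodge star and musical isomorphisms used in \ref{eq:AbstractDOperator} must be recomputed with respect to $\tilde g$, not $g$.
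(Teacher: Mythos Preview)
Your proof is correct and follows essentially the same route as the paper: both choose the conformal rescaling $\tilde g=\sigma^{2/m}g$ so that $\tilde\sigma=1$, and then invoke Proposition~\ref{prop:FickJacobs} together with \ref{eq:sigmaAsVolumes}. Your explicit remark that $\DD$ must be recomputed with respect to $\tilde g$ is a useful clarification that the paper leaves implicit.
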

\begin{proof}
Let $<,>_{E}$ and $<,>_{M}$ be any metrics in $E$ and $M$. Define
a new metric $<,>'_{M}$ in $M$ by the formula
\[
<X,Y>'_{M}=\sigma^{2/m}<X,Y>_{M},
\]
for $\sigma$ defined in \ref{eq:sigma}. Then, since
\[
\sigma^{2/m}=\left(\frac{\pi_{*}(\mu_{E})}{\mu_{M}}\right)^{2/m}
\]
we have that
\[
\sigma'_{M}=\frac{\pi_{*}(\mu_{E})}{\mu'_{M}}=\frac{\pi_{*}(\mu_{E})}{(\sigma^{2/m})^{m/2}\mu_{M}}=1.
\]
Equation \ref{eq:effectiveDiffEquation} then follows directly from
Proposition \ref{prop:FickJacobs}, and the last part of the Proposition
follows from equation \ref{eq:sigmaAsVolumes}.
\end{proof}

\section{Constant-width channels on the plane}

\label{sec:CurvesOnThePlane}

\begin{figure}

\includegraphics[scale=0.4]{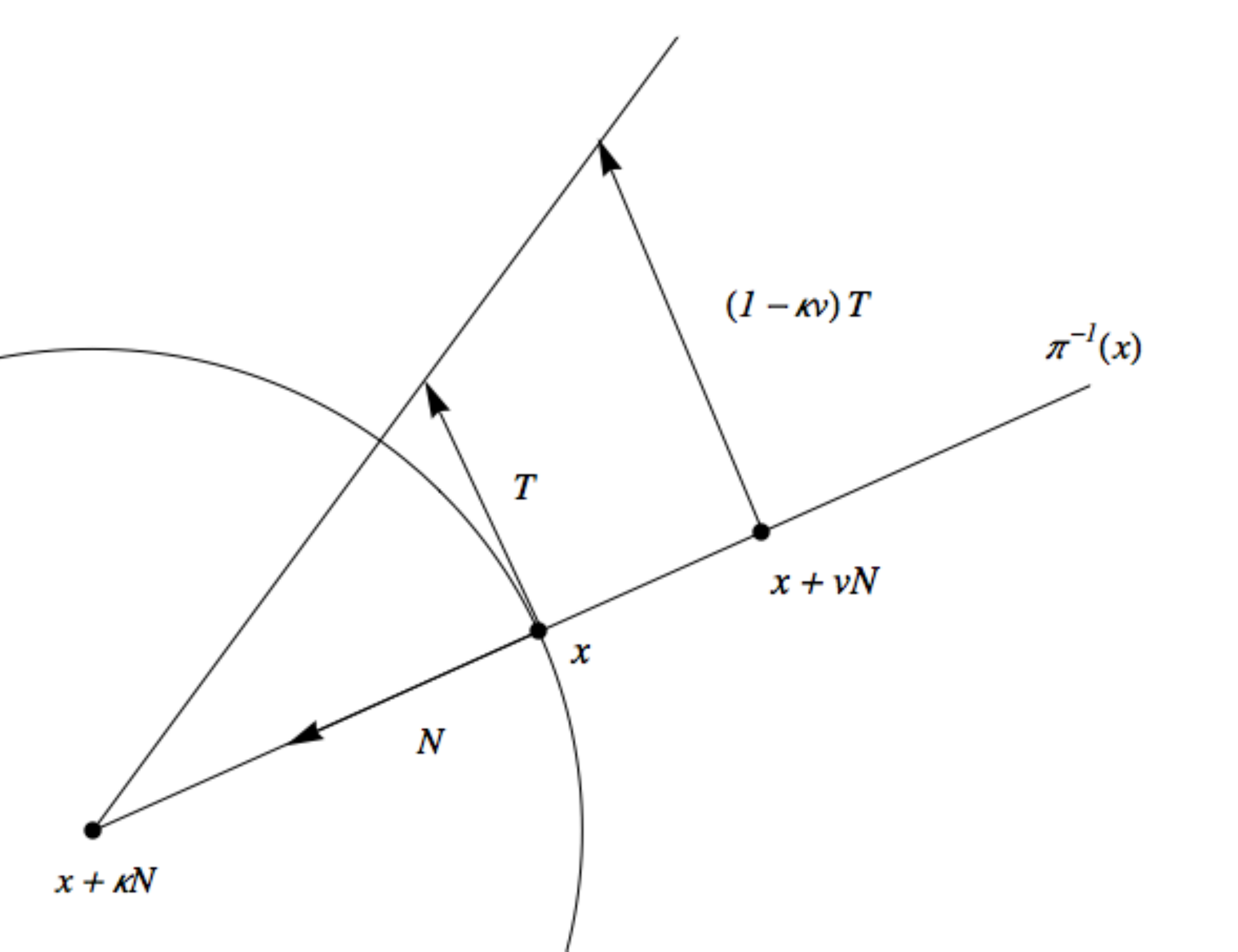}\protect\caption{\label{fig:Lift2D}Lifting tangent vector $T$ to vector $(1-\kappa v)T$
along a fiber $\pi^{-1}(x)$}

\end{figure}

In this section we compute the effective diffusion function of a channel
of constant width $w$ over a curve $C$ on the plane. Such a channel
can be represented as the set 
\[
E=\{x+vN(x)|x\in C\hbox{\, and\,}-w/2\leq v\leq w/2\},
\]
where $N$ is a unit normal field to $C$. For a reasonable curves
(e.g compact) and small $w$, the space $E$ is a fibre bundle with
projection map $\pi:E\rightarrow M$ given by $\pi(p)=x$ where
\[
p=x+vN(x).
\]
Let $T$ be a unit tangent field to $C$ that makes the frame $T,N$
positively oriented. To perform integration over the fibers of $E$
we need to compute the lift of $T$ to $E$. If we define 
\begin{eqnarray*}
\TT(p) & = & (1-\kappa(x)v)T(x)\\
\NN(p) & = & N(x)
\end{eqnarray*}
then $\TT$ is such a lift of $T$ (see Figure \ref{fig:Lift2D}),
and $D\pi(\NN)=0$. Let $T^{*}$ be the dual field to $T$, and $\TT^{*},\NN^{*}$
be the dual frame to $\TT,\NN$. The matrix of $<,>_{E}$ in the frame
$\TT,\NN$ is given by 
\[
g=\left(\begin{array}{cc}
(1-\kappa v)^{2} & 0\\
0 & 1
\end{array}\right).
\]

\subsection*{Computing $\sigma$}

The volume element in $E$ is 
\[
\mu_{E}=(1-\kappa v)\TT^{*}\wedge\NN^{*},
\]
and hence
\[
\pi_{*}(\mu_{E})=\left(\int_{-w/2}^{w/2}(1-\kappa v)dv\right)T^{*}=wT^{*}.
\]
Since the volume form in $C$ is $T^{*},$ we conclude that
\[
\sigma=\frac{wT^{*}}{T^{*}}=w.
\]

\subsection*{Computing $\protect\DD$}

Observe that
\[
\pi^{*}(T^{*})=\TT^{*},
\]
and by formula \ref{eq:StarOneForm} in the Appendix we have that
\[
*(\TT^{*})=g^{11}\det(g)^{1/2}\NN^{*}=(1-\kappa v)^{-1}\NN^{*}.
\]
Hence
\begin{eqnarray*}
\pi_{*}(*(\pi^{*}(T^{*}))) & = & \int_{-w/2}^{w/2}(1-\kappa v)^{-1}dv\\
 & = & \frac{1}{\kappa}\log\left(\frac{1+\kappa w/2}{1-\kappa w/2}\right)
\end{eqnarray*}
Using formula \ref{eq:AbstractDOperator} we obtain
\begin{eqnarray}
\DD & = & \frac{D_{0}}{\kappa w}\log\left(\frac{1+\kappa w/2}{1-\kappa w/2}\right)\nonumber \\
 & = & \left(\frac{2D_{0}}{\kappa w}\right)\hbox{arctanh}(\kappa w/2).\label{eq:curveD}
\end{eqnarray}
We obtained this formula in \cite{fj:projdiff} by different methods.

\section{The interface between two equidistant surfaces in 3-d space}

\label{sec:SurfacesIn3dSpace}

Let $S$ be an orientable surface in $\RR^{3}$ and $N$ a unit normal
field to this surface. For small $w>0$ and a ``reasonable surface''
the space 
\begin{equation}
E=\{x+vN(x)|x\in S\and-w/2\leq v\leq w/2\}.\label{eq:surfaceWithThickness}
\end{equation}
is a fibre bundle over $S$ with projection map $\pi:E\rightarrow M$
given by $\pi(p)=x,$ where
\[
p=x+vN(x).
\]
The surface $S$ has principal directions fields $T_{1}$ and $T_{2}$,
with corresponding principal curvatures $\kappa_{1}$ and $\kappa_{2}$.
If we define
\begin{eqnarray*}
\TT_{1}(p) & = & (1-\kappa_{1}(x)v)T_{1},\\
\TT_{2}(p) & = & (1-\kappa_{2}(x)v)T_{2},\\
\NN(p) & = & N(x),
\end{eqnarray*}
then $\TT_{1}$ and $\TT_{2}$ are lifts of $T_{1}$ and $T_{2}$,
and $D\pi(\NN)=0$. The metric in $<,>_{E}$ is represented in the
$\TT_{1},\TT_{2},\NN$ frame by 
\[
g=\left(\begin{array}{ccc}
(1-\kappa_{1}v)^{2} & 0 & 0\\
0 & (1-\kappa_{2}v)^{2} & 0\\
0 & 0 & 1
\end{array}\right).
\]
We will let $\TT_{1}^{*},\TT_{2}^{*},\NN^{*}$ be the dual frame to
$\TT_{1},\TT_{2},\NN$.

\subsection*{Computing $\sigma$}

The volume form in $E$ is 
\[
\mu_{E}=(1-\kappa_{1}v)(1-\kappa_{2}v)\TT_{1}^{*}\wedge\TT_{2}^{*}\wedge\NN^{*},
\]
and hence
\begin{eqnarray*}
\pi_{*}(\mu_{E}) & = & \left(\int_{-w/2}^{w/2}(1-\kappa_{1}v)(1-\kappa_{2}v)dv\right)T_{1}^{*}\wedge T_{2}^{*},\\
 & = & w(1+\kappa_{1}\kappa_{2}w^{2}/12)T_{1}^{*}\wedge T_{2}^{*}.
\end{eqnarray*}
Since the volume form in $M$ is $T_{1}^{*}\wedge T_{2}^{*}$, we
conclude from formula \ref{eq:sigma} that 
\[
\sigma=w(1+\kappa_{1}\kappa_{2}w^{2}/12).
\]

\subsection*{Computing $\protect\DD$}

Using formula \ref{eq:StarOneForm} in the Appendix we obtain that
\begin{eqnarray*}
*(\TT_{1}^{*}) & = & g^{11}\det(g)^{1/2}\TT_{2}^{*}\wedge\NN^{*}=\left(\frac{1-\kappa_{2}v}{1-\kappa_{1}v}\right)\TT_{2}^{*}\wedge\NN^{*},\\
*(\TT_{2}^{*}) & = & -g^{22}\det(g)^{1/2}\TT_{1}^{*}\wedge\NN^{*}=-\left(\frac{1-\kappa_{1}v}{1-\kappa_{2}v}\right)\TT_{1}^{*}\wedge\NN^{*}.
\end{eqnarray*}
From these formulas and the identities $*T_{1}^{*}=T_{2}^{*},*T_{2}^{*}=-T_{1}^{*}$,
we obtain
\begin{eqnarray*}
*(\pi_{*}(*(\TT_{1}^{*}))) & =- & \left(\int_{-w/2}^{w/2}\left(\frac{1-\kappa_{2}v}{1-\kappa_{1}v}\right)dv\right)T_{1}^{*},\\
*(\pi_{*}(*(\TT_{2}^{*}))) & =- & \left(\int_{-w/2}^{w/2}\left(\frac{1-\kappa_{1}v}{1-\kappa_{2}v}\right)dv\right)T_{2}^{*}.
\end{eqnarray*}
By evaluating the above integrals, using formulas
\[
\pi*(T_{1}^{*})=\TT_{1}^{*},\pi*(T_{2}^{*})=\TT_{2}^{*},
\]
and the fact that the matrix of the metric $<,>_{M}$ is the identity,
formula \ref{eq:AbstractDOperator} yields
\[
\DD(T_{1})=\DD_{1}T_{1}\and\DD(T_{2})=\DD_{2}T_{2}.
\]
 where
\begin{eqnarray}
\DD_{1} & = & \frac{D_{0}\left(w\kappa_{1}\kappa_{2}+2(\kappa_{2}-\kappa_{1})\hbox{arctanh}(\kappa_{1}w/2)\right)}{\kappa_{1}^{2}w(1+\kappa_{1}\kappa_{2}w^{2}/12)},\label{eq:SurfaceD1}\\
\DD_{2} & = & \frac{D_{0}\left(w\kappa_{1}\kappa_{2}-2(\kappa_{2}-\kappa_{1})\hbox{arctanh}(\kappa_{2}w/2)\right)}{\kappa_{2}^{2}w(1+\kappa_{1}\kappa_{2}w^{2}/12)}.\label{eq:SurfaceD2}
\end{eqnarray}
We have just proved the following result.
\begin{prop}
\label{prop:EigenD}Let $S$ be a surface in $\RR^{3}$ with principal
direction fields $T_{1}$ and $T_{2}$, and corresponding principal
curvatures $\kappa_{1}$ and $\kappa_{2}$. For the bundle $E$ given
by \ref{eq:surfaceWithThickness}, we have that the fields $T_{1}$
and $T_{2}$ are eigenvectors of $\DD$ with corresponding eigenvalues
$\DD_{1}$ and $\DD_{2}$ given \ref{eq:SurfaceD1} and \ref{eq:SurfaceD2}.\end{prop}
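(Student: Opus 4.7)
The plan is to apply the general machinery of Proposition \ref{prop:FickJacobs} directly in the adapted frame $\TT_{1},\TT_{2},\NN$ already constructed, so that the calculation reduces to evaluating the operator chain $\sharp\circ*\circ\pi_{*}\circ*\circ\pi^{*}\circ\flat$ from formula \eqref{eq:AbstractDOperator} on each principal field $T_{i}$ separately. The key geometric input is that because $T_{1}$ and $T_{2}$ are principal directions, the metric $g$ in the lifted frame is diagonal; this diagonality will carry through each step of the operator chain and force $\DD(T_{i})$ to remain a multiple of $T_{i}$, which is precisely the eigenvector claim.

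Concretely, I would first note that $\pi^{*}(T_{i}^{*})=\TT_{i}^{*}$, since the lifts $\TT_{i}$ satisfy $D\pi(\TT_{i})=T_{i}$ and $D\pi(\NN)=0$. Next I would apply the Hodge star $*$ on $E$ to $\TT_{i}^{*}$ using formula \eqref{eq:StarOneForm} together with the diagonal metric $g$ of the excerpt: this produces $\TT_{j}^{*}\wedge\NN^{*}$ (for $j\neq i$) multiplied by the ratio $(1-\kappa_{j}v)/(1-\kappa_{i}v)$ and a sign depending on $i$. Then I would integrate along the fiber, i.e. along $v\in[-w/2,w/2]$, with the volume contribution already folded into the ratio, to obtain a scalar multiple of $T_{j}^{*}$ on $M$. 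Applying the $m=2$ Hodge star on $M$ (where $*T_{1}^{*}=T_{2}^{*}$ and $*T_{2}^{*}=-T_{1}^{*}$), then $\sharp$, then dividing by $\sigma$ and multiplying by $(-1)^{m-1}D_{0}=-D_{0}$ as dictated by \eqref{eq:AbstractDOperator}, yields $\DD(T_{i})=\DD_{i}T_{i}$.

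The final substantive step is the evaluation of the two fiber integrals
\begin{equation*}
\int_{-w/2}^{w/2}\frac{1-\kappa_{j}v}{1-\kappa_{i}v}\,dv,
\end{equation*}
which I would carry out by the rewriting $\frac{1-\kappa_{j}v}{1-\kappa_{i}v}=\frac{\kappa_{j}}{\kappa_{i}}+\bigl(1-\tfrac{\kappa_{j}}{\kappa_{i}}\bigr)\frac{1}{1-\kappa_{i}v}$; the antiderivative produces a logarithm which, evaluated between symmetric limits, yields $\frac{2}{\kappa_{i}}\hbox{arctanh}(\kappa_{i}w/2)$. Combining this with the already computed $\sigma=w(1+\kappa_{1}\kappa_{2}w^{2}/12)$ and simplifying gives exactly the formulas \eqref{eq:SurfaceD1} and \eqref{eq:SurfaceD2}.

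I do not expect a serious obstacle, since Proposition \ref{prop:FickJacobs} has done all the conceptual work and principal-direction diagonality keeps the computation componentwise; the only nuisance is bookkeeping of signs from the two Hodge stars (in $E$ of dimension three and in $M$ of dimension two) and of the overall $(-1)^{m-1}$ factor, together with the algebraic simplification that turns the raw integrals plus the $1/\sigma$ factor into the stated arctanh expressions.
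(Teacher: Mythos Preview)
Your proposal is correct and follows essentially the same route as the paper: compute $*\TT_{i}^{*}$ from the diagonal metric via \eqref{eq:StarOneForm}, push forward by $\pi_{*}$ to obtain the fiber integrals $\int_{-w/2}^{w/2}(1-\kappa_{j}v)/(1-\kappa_{i}v)\,dv$, apply the Hodge star on $M$, and then read off the eigenvalues after dividing by $\sigma$. The only difference is that you spell out the integration step (the rewriting $\frac{1-\kappa_{j}v}{1-\kappa_{i}v}=\frac{\kappa_{j}}{\kappa_{i}}+(1-\tfrac{\kappa_{j}}{\kappa_{i}})\frac{1}{1-\kappa_{i}v}$ leading to $\hbox{arctanh}$), which the paper leaves implicit.
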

\begin{rem*}
At an umbilical point $x$ of $S$ (i.e where $\kappa_{1}(x)=\kappa_{2}(x))$
the above proposition is still valid for any pair of orthonormal vectors
$T_{1},T_{2}$ in $\tangent_{x}S$.
\end{rem*}
It is important to observe that if we want to write the effective
diffusion equation \ref{eq:effectiveDiffEquation} in coordinates,
we need to express the principal direction fields $T_{1}$ and $T_{2}$
in terms of the corresponding coordinate fields. The reason for this
is that the divergence operator that enters into the effective diffusion
equation needs coordinates for its computation. 
\begin{rem*}
We can express eigenvalues of $\DD$ in terms of the gaussian and
mean curvatures
\[
K=\kappa_{1}\kappa_{2}\and H=\frac{1}{2}(\kappa_{1}+\kappa_{2})
\]
by using the identities
\[
\kappa_{1}=H+\sqrt{H^{2}-K}\and\kappa_{2}=H-\sqrt{H^{2}-K}.
\]

\end{rem*}
We now discuss some applications of Proposition \ref{prop:EigenD}
to specific families of surfaces.

\subsection*{Spheres}

In this case we have that the principal curvatures $\kappa_{1}$ and
$\kappa_{2}$ satisfy
\[
\kappa_{1}=\kappa_{2}=1/r,
\]
where $r$ is the radius of the sphere. From formulas \ref{eq:SurfaceD1}
and \ref{eq:SurfaceD2}, we obtain that the eigenvalues of $\DD$
are
\begin{equation}
\DD_{1}=\DD_{2}=\frac{D_{0}12r^{2}}{12r^{2}+w^{2}}=\frac{D_{0}}{1+\frac{w^{2}}{12r^{2}}}.\label{eq:D1D2Sphere}
\end{equation}

\subsection*{Recovering the one dimensional case}

Let $C$ be any curve on the plane with curvature function $\kappa$,
and let $S=C\times\RR\subset\RR^{3}$. The principal fields of $S$
are the unit tangent $T$ to the curve and $(0,0,1)$, with corresponding
eigenvalues $\kappa$ and $0$. Using formulas \ref{eq:SurfaceD1}
and \ref{eq:SurfaceD2} we obtain
\begin{eqnarray*}
\DD_{1} & = & \left(\frac{2D_{0}}{\kappa w}\right)\hbox{arctanh}(\kappa w/2),\\
\DD_{2} & = & D_{0}.
\end{eqnarray*}
Hence, the eigenvalue $\DD_{1}$ of $\DD$ coincides with the case
of curves in the plane discussed in section \ref{sec:CurvesOnThePlane}
(see Formula \ref{eq:curveD}). 

We will now show how to write equation \ref{eq:SuperGeneralizedFJ}
in local coordinates. Consider the coordinates $(s,z)$ in $S$ where
$s$ is the arc-length parameter of the curve $C$ and $z$ is standard
$z$-coordinate in $\RR^{3}$. The coordinate fields
\[
\dero s\and\dero z
\]
are the principal direction fields of $S$, and 
\begin{eqnarray*}
\DD\left(\dero s\right) & = & \left(\frac{2D_{0}\hbox{arctanh}(\kappa w/2)}{\kappa w}\right)\dero s,\\
\DD\left(\dero z\right) & = & D_{0}\dero z.
\end{eqnarray*}
Since the metric matrix is the identity, we have that
\[
\nabla\rho=\der{\rho}s\dero s+\der{\rho}z\dero z.
\]
We conclude that in this case formula \ref{eq:SuperGeneralizedFJ}
becomes 
\begin{equation}
\der{\rho}t=\left(\frac{2D_{0}\hbox{arctanh}(\kappa w/2)}{\kappa w}\right)\frac{\partial^{2}\rho}{\partial s^{2}}+\frac{\partial^{2}\rho}{\partial z^{2}}.\label{eq:FJLinedSurface}
\end{equation}
If we use the series expansion 
\[
\hbox{arctanh}(x)=x+\frac{x^{3}}{3}+\frac{x^{5}}{5}+\ldots,
\]
we obtain
\[
\frac{2D_{0}\hbox{arctanh}(\kappa w/2)}{\kappa w}=1+\frac{w^{2}\kappa^{2}}{12}+\frac{w^{4}\kappa^{4}}{180}+\cdots
\]
If we use only the first two terms of the above series in equation
\ref{eq:FJLinedSurface} we obtain
\[
\der{\rho}t=\left(1+\frac{w^{2}\kappa^{2}}{12}\right)\frac{\partial^{2}\rho}{\partial s^{2}}+\frac{\partial^{2}\rho}{\partial z^{2}}.
\]
This last formula is that obtained by Ogawa in \cite{fj:ogawa-surfaces}.

\subsection*{The torus}

\begin{figure}

\includegraphics[scale=0.45]{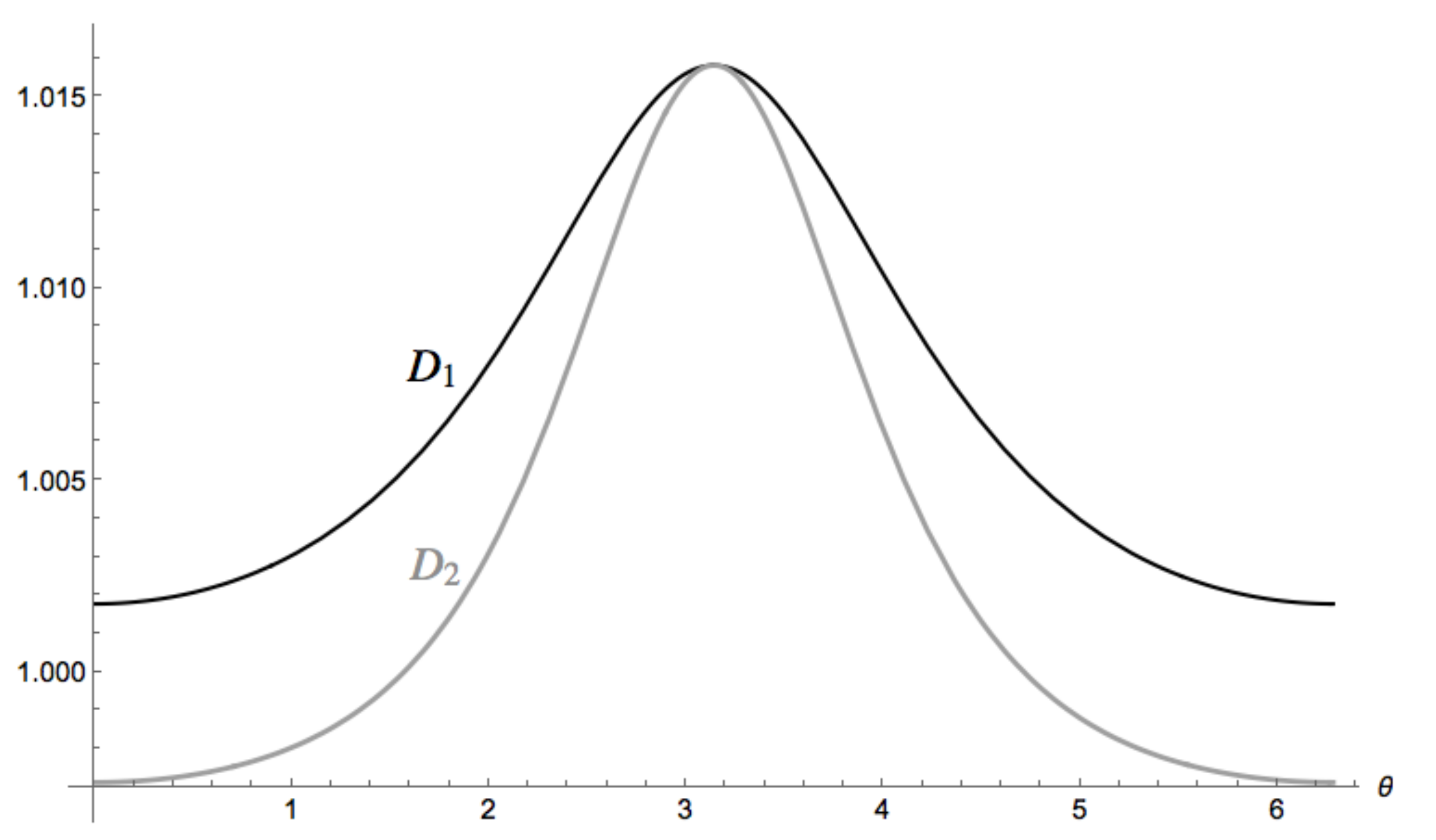}\protect\caption{\label{fig:torusD1}Eigenvalues $\protect\DD_{1}$ and $\protect\DD_{2}$
of the effective diffusion $\protect\DD$ for $D_{0}=1$, on a torus
with inner radius $r=1$ , outer radius $R=2$ and width $w=1/4$. }
\end{figure}
For a torus with inner radius $r$ and outer radius $R$ we have that
\[
\kappa_{1}=-1/r\and\kappa_{2}=-\frac{\cos(\theta)}{R+r\cos(\theta)},
\]
where $\theta$ is the variable parametrizing the parallels of the
torus. In Figure \ref{fig:torusD1} we show the graphs of $\DD_{1}$
and $\DD_{2}$ obtained by using the above values of $\kappa_{1}$
and $\kappa_{2}$ for specific values of $r,R$ and $w$.

\section{Effective diffusion in the surface of a tube}

\label{sec:EffectiveDiffSurfaceTube}

Let $C$ be a curve in three dimensional space, and let $T,N,B$ be
the corresponding Serret-Frenet frame. We will let $E$ be the set
of points of the form (for a constant radius $r$)
\[
p=x+r\cos(\theta)N+r\sin(\theta)B\where{x\in C}.
\]
We want to construct a lift of $T$ to $E$. To do this, consider
$x=x(s),T=T(s),N=N(s)$ and $B=B(s)$ as functions of the arc length
parameter $s$ of $C$. From the formula
\[
\pi(p(s))=x(s)
\]
we obtain
\[
D\pi\left(\frac{dp}{ds}\right)=T\and D\pi\left(\frac{dp}{d\theta}\right)=0,
\]
where (using the Serret-Frenet formulas)
\begin{eqnarray*}
\frac{dp}{ds} & = & (1-\kappa r\cos(\theta))T+\tau r(-\sin(\theta)N+\cos(\theta)B),\\
\frac{dp}{d\theta} & = & -r\sin(\theta)N+r\cos(\theta)B,
\end{eqnarray*}
for $\kappa$ and $\tau$ the curvature and torsion of $C$. Hence,
if we define
\begin{eqnarray*}
\TT & = & (1-\kappa r\cos(\theta))T,\\
\NN & = & -r\sin(\theta)N+r\cos(\theta)B,
\end{eqnarray*}
then $\TT$ is lift of $T$ and $D\pi(\NN)=0$. The metric matrix
in the $\TT,\NN$ frame is 
\[
g=\left(\begin{array}{cc}
(1-\kappa r\cos(\theta))^{2} & 0\\
0 & r^{2}
\end{array}\right).
\]

\subsection*{Computing $\sigma$}

The volume form in $E$ is 
\[
\mu_{E}=r(1-\kappa r\cos(\theta))\TT^{*}\wedge\HH^{*},
\]
so that
\[
\pi_{*}(\mu_{E})=\left(\int_{0}^{2\pi}r(1-r\kappa\cos(\theta))d\theta\right)T^{*}=2\pi rT^{*}
\]
and hence 
\[
\sigma=\frac{2\pi rT^{*}}{T^{*}}=2\pi r.
\]

\subsection*{Computing $\protect\DD$}

We have that
\[
\pi^{*}(T^{*})=\TT^{*},
\]
and
\[
*\TT^{*}=g^{11}\det(g)^{1/2}\HH^{*}=r(1-\kappa r\cos(\theta))^{-1}\HH^{*},
\]
so that
\[
\pi_{*}(*\TT^{*})=\int_{0}^{2\pi}\left(\frac{r}{1-\kappa r\cos(\theta)}\right)d\theta=\frac{2\pi r}{1+r\kappa}\left(\sqrt{\frac{2}{1-r\kappa}-1}\right)T^{*}.
\]
Hence
\begin{eqnarray*}
\DD & = & \frac{D_{0}}{(1+r\kappa)}\left(\sqrt{\frac{2}{1-r\kappa}-1}\right).\\
 & = & D_{0}\sqrt{\frac{1}{1-r^{2}\kappa^{2}}}
\end{eqnarray*}

\section{Conclusions}

\label{conclusions}

We have shown that the diffusion equation on the total space of a
fiber bundle can be projected onto a diffusion equation on its base
space, under the hypothesis of infinite diffusion rate along the fibers.
We provided a general formula for the effective diffusion endomorphism
of the reduced diffusion equation, that we later applied to obtain
explicit formulas for diverse fiber bundles. Of particular interest
was the computation of the effective diffusion endomorphism associated
to the interface of two equidistant surfaces in 3-dimensional space,
in terms of the principal curvatures of the base surface.

\section{Appendix }

\label{sec:Appendix}

\subsection*{The sharp and flat operators}

We can see a 1-form $\alpha$ as a vector field $\alpha^{\sharp}$
defined by 
\[
<\alpha^{\sharp},X>=\alpha(X),
\]
where $<,>$ is the metric of the space under consideration. Similarly,
a vector field $X$ can be seen as a 1-form $X^{\flat}$ defined by
\[
X^{\flat}(Y)=<X,Y>.
\]
For a local frame $X_{1},\ldots,X_{n}$ the metric can be expressed
as a symmetric matrix $g$ with coefficients
\[
g_{ij}=<X_{i},X_{j}>,
\]
and we will write
\[
g^{ij}=(g^{-1})_{ij}.
\]
Let $X^{1},\ldots,X^{n}$ be the 1-forms forming the dual frame to
$X_{1},\ldots,X_{n}$, so that $X^{i}(X_{j})=\delta_{j}^{i}$. For
a vector field 
\[
X=\sum_{i=1}^{n}a^{i}X_{i}
\]
we have that 
\[
X^{\flat}=\sum_{i=1}^{n}a_{i}X^{i}\where a_{i}=\sum_{j=1}^{n}g_{ij}a^{j}.
\]
For a 1-form 

\[
\alpha=\sum_{i=1}^{n}\alpha_{i}X^{i}
\]
we have that
\[
\alpha^{\sharp}=\sum_{i=1}^{n}\alpha^{i}X_{i}\where\alpha^{i}=\sum_{j=1}^{n}g^{ij}\alpha_{j}.
\]

\subsection*{Hodge star operator}

The Hodge star operator $*$ maps $l$-forms to $(n-l)$ forms, where
$n$ is the dimension of the space under consideration, and it is
defined so that for $l$-forms $\omega$ and $\eta$ we have that
\[
\omega\wedge(*\eta)=<\omega,\eta>\mu,
\]
where $\mu$ is the volume form of the metric. For monomials 
\[
\omega=\alpha_{1}\wedge\ldots\wedge\alpha_{l}\and\eta=\beta_{1}\wedge\ldots\beta_{l},
\]
where the $\alpha_{i}$'s and the $\beta_{j}$'s are 1-forms, we have
that
\[
<\omega,\eta>=\det(<\alpha_{i},\beta_{j}>)\where<\alpha_{i},\beta_{j}>=<\alpha_{i}^{\sharp},\beta_{j}^{\sharp}>.
\]
The $*$-operator satisfies the duality relation, where for an $l$-form
$\omega$ we have that
\[
**\omega=(-1)^{l(n-l)}\omega.
\]
If for a local frame of vector fields $X_{1},\ldots,X_{n}$ we have
the metric matrix $g_{ij}=\left\langle X_{i},X_{j}\right\rangle $,
then the coefficients $g^{ij}$ of $g^{-1}$ are
\[
g^{ij}=<X^{i},X^{j}>.
\]
The metric volume form is given by
\[
\mu=\det(g)^{1/2}X^{1}\wedge\ldots\wedge X^{n}.
\]
Using the above formulas we obtain 
\[
*(X^{1}\wedge\ldots\wedge X^{n})=\det(g)^{-1/2},
\]
and 
\begin{equation}
*X^{i}=\det(g)^{1/2}\sum_{j=1}^{n}g^{ij}\iota_{j}(X^{1}\wedge\ldots\wedge X^{n}),\label{eq:StarOneForm}
\end{equation}
where 
\[
\iota_{j}(X^{1}\wedge\ldots\wedge X^{n})=(-1)^{j+1}X^{1}\wedge\ldots\wedge X^{i-1}\wedge\hat{X^{j}}\wedge X^{i+1}\wedge\ldots\wedge X^{n},
\]
and the hat in $\hat{X^{j}}$ indicates that that term has been removed
as a factor in the above wedge product. 

\bibliographystyle{plain}
\bibliography{myBib}

\begin{thebibliography}{10}

\bibitem{fj:antipov}
Anatoly~E. Antipov, Alexander~V. Barzykin, Alexander~M. Berezhkovskii, Yurii~A.
  Makhnovskii, Vladimir~Yu. Zitserman, and Sergei~M. Aldoshin.
\newblock Effective diffusion coefficient of a brownian particle in a
  periodically expanded conical tube.
\newblock {\em Physical Review E}, 88(054101), 2013.

\bibitem{kn:bott}
R.~Bott and L.W. Tu.
\newblock {\em Differentiable Forms in Algebraic Topology}.
\newblock Number~82 in Graduate Texts in Mathematics. Springer Verlag, 1982.

\bibitem{fj:bradley}
R.M. Bradley.
\newblock Diffusion in a two-dimensional channel with curved midline and
  varying width.
\newblock {\em Physical Review E}, 80(061142), 2009.

\bibitem{fj:di-projection-diffusion}
L.~Dagdug and I.~Pineda.
\newblock Projection of two-dimensional diffusion in a curved midline and
  narrow varying width channel onto the longitudinal dimension.
\newblock {\em The Journal of Chemical Physics}, 137(024107), 2012.

\bibitem{fj:kp-Extendex-fj-variational}
P.~Kalinay and K.~Percus.
\newblock Extended fick-jacobs equation: Variational approach.
\newblock {\em Physical Review E}, 72(061203), 2005.

\bibitem{fj:kp-diffusion-projection}
P.~Kalinay and K.~Percus.
\newblock Projection of a two-dimensional diffusion in a narrow channel onto
  the longitudinal dimension.
\newblock {\em The Journal of Chemical Physics}, 122(204701), 2005.

\bibitem{fj:kp-fick-jacob-correction}
P.~Kalinay and K.~Percus.
\newblock Corrections to the fick-jacobs equation.
\newblock {\em Physical Review E}, 74(041203), 2006.

\bibitem{fj:aproximations}
P.~Kalinay and K.~Percus.
\newblock Aproximations to the generalized fick-jacobs equation.
\newblock {\em Physical Review E}, 78(021103), 2008.

\bibitem{fj:ogawa-surfaces}
N.~Ogawa.
\newblock Curvature-dependent diffusion flow on a surface with thickness.
\newblock {\em Physical Review E}, 81(061113), 2010.

\bibitem{fj:ogawa}
N.~Ogawa.
\newblock Diffusion in a curved cube.
\newblock {\em Physics Letters A}, 377:2465--2471, 2013.

\bibitem{fj:projdiff}
C.~Valero and R.~Herrera.
\newblock Projecting diffusion along the normal bundle of a plane curve.
\newblock {\em Journal of Mathematical Physics}, 55(053509), 2014.

\bibitem{fj:entropybarrierzwanzig}
Robert Zwanzig.
\newblock Diffusion past an entropy barrier.
\newblock {\em The Journal of Chemical Physics}, 96(10):3926--3930, 1992.

\end{thebibliography}

\end{document}